\newtheorem{definition}{Definition}
\newtheorem{proposition}{Proposition}
\newtheorem{lemma}{Lemma}
  \providecommand\BibTeX{{%
    \normalfont B\kern-0.5em{\scshape i\kern-0.25em b}\kern-0.8em\TeX}}}
\begin{document}
\pagestyle{plain}
\title{How Fair is Fairness-aware Representative Ranking and Methods for Fair Ranking}

\author{Akrati Saxena}

\affiliation{%
  \institution{Department of Mathematics and Computer Science\\
  Eindhoven University of Technology}
  \country{the Netherlands}
}
\email{a.saxena@tue.nl}

\author{George Fletcher}
\affiliation{%
  \institution{Department of Mathematics and Computer Science\\
  Eindhoven University of Technology}
  \country{the Netherlands}}
\email{g.h.l.fletcher@tue.nl}

\author{Mykola  Pechenizkiy}
\affiliation{%
  \institution{Department of Mathematics and Computer Science\\
  Eindhoven University of Technology}
  \country{the Netherlands}
}
\email{m.pechenizkiy@tue.nl}

\renewcommand{\shortauthors}{Saxena et al.}

\begin{abstract}
Rankings of people and items has been highly used in selection-making, match-making, and recommendation algorithms that have been deployed on ranging of platforms from employment websites to searching tools. The ranking position of a candidate affects the amount of opportunities received by the ranked candidate. It has been observed in several works that the ranking of candidates based on their score can be biased for candidates belonging to the minority community. In recent works, the fairness-aware representative ranking was proposed for computing fairness-aware re-ranking of results. The proposed algorithm achieves the desired distribution of top-ranked results with respect to one or more protected attributes. In this work, we highlight the bias in fairness-aware representative ranking for an individual as well as for a group if the group is sub-active on the platform. We define individual unfairness and group unfairness and propose methods to generate ideal individual and group fair representative ranking if the universal representation ratio is known or unknown. The simulation results show the quantified analysis of fairness in the proposed solutions. The paper is concluded with open challenges and further directions. 
\end{abstract}

\begin{CCSXML}
<ccs2012>
   <concept>
       <concept_id>10002951.10003317.10003338.10003343</concept_id>
       <concept_desc>Information systems~Learning to rank</concept_desc>
       <concept_significance>500</concept_significance>
       </concept>
   <concept>
       <concept_id>10002951.10003317.10003338.10003345</concept_id>
       <concept_desc>Information systems~Information retrieval diversity</concept_desc>
       <concept_significance>300</concept_significance>
       </concept>
 </ccs2012>
\end{CCSXML}

\ccsdesc[500]{Information systems~Learning to rank}
\ccsdesc[300]{Information systems~Information retrieval diversity}

\keywords{Fairness, Representative Ranking, Bias}

\maketitle

\section{Introduction}\label{intro}



The ranking problem is encountered in many different applications, including ordering job candidates, providing results for search queries, providing suggestions and recommendations \cite{castillo2019fairness, cohen2019efficient, sanchez2020does}. There are several works that have highlighted the bias in placing the elements in top-$k$ returned results \cite{yang2017measuring, zehlike2017fa, kay2015unequal}. It is important to understand the limitations of these methods and update the models to achieve fair outcomes for sensitive scenarios as they have been applied to everyday applications \cite{corbett2017algorithmic,dwork2012fairness,friedler2016possibility}. There are several works on proposing fairness-aware ranking methods, including \cite{asudeh2019designing, biega2018equity, castillo2019fairness, celis2017ranking, zehlike2017fa, yang2017measuring, singh2018fairness}. The recent works have investigated both notions of fairness, (i) individual fairness \cite{dwork2012fairness, binns2020apparent, ilvento2019metric}, and (ii) group fairness \cite{pedreshi2008discrimination, lohia2019bias, pedreschi2009measuring}.

Geyik et al. \cite{geyik2019fairness} proposed fairness-aware representative ranking (FRR) maintains the ratio of candidates for each category in top-k elements based on their ratio in the eligible candidates so that each group will get a fair number of resources. They verified the proposed method by applying FRR on LinkedIn Talent Search and observed three-times improvement in the fairness without affecting the quality of the results. However, there is one big concern with the proposed method is what representation ratio for each group of candidates based on protected attributes should be maintained? We show that the deserving candidates of a group miss out on the opportunities by getting ranked lower in the returned list if the universal representation ratio of the candidates is not known. 


\textbf{Example.} Let's assume that for a search query $r$, the set of eligible candidates in Universe is represented by $U$ and the set of eligible candidates on a platform $\mathbb{L}$ is represented by $L$. For a protected attribute $A$ having two values the candidates are divided into two groups $G_1$ and $G_2$. The elements belonging to group $G_1$ are $\{b_1, b_2, b_3, b_4, b_5, b_6, b_7, b_8, b_9 , b_{10}\}$, and the ranking of candidates is $b_1 > b_2 > b_3 > b_4 > b_5 > ... > b_{10}$, therefore candidate $b_1$ has the highest ranking 1. The candidates belonging to group $G_2$ are $\{g_1, g_2, g_3, g_4, g_5\}$ and their ranking is $g_1 > g_2 > g_3 > g_4> g_5$. Now, we assume that all candidates of group $G_1$ joins the platform $\mathbb{L}$, and group $G_2$ is sub-active (or less active) on platform $L$, therefore, only two candidates $(g_1, g_2)$ join the platform $\mathbb{L}$ as shown in Figure 1. Both set $U$ and $L$ will apply fair representative ranking by maintaining the ratio of candidates for each protected group when top-$k$ candidates are requested; The fair representative ranking is explained in detail in Section 2.1. In this example, set $U$ will maintain the ratio $(G_1: G_2) = (10 : 5)$ and set $L$ will maintain the ratio $(G_1: G_2) = (10 : 2)$. \\
\textbf{If an employer requests top-6 candidates for search query $r$:}

\begin{itemize}
    \item $U$ will return: $(b_1 > g_1 > b_2 > b_3 > g_2 > b_4)$.
    \item $L$ will return: $(b_1 > b_2 > b_3 > g_1 > b_4 > b_5)$.
\end{itemize}

\begin{figure}[t]
    \centering
    \includegraphics[width=\linewidth]{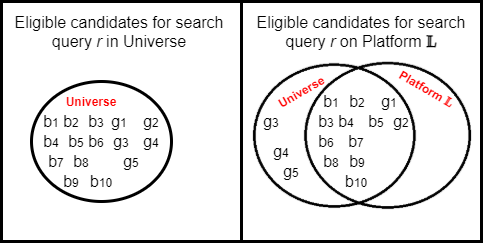} 
    \caption{Example for displaying the unfairness in representative ranking for a sub-active group.}
    \label{fig:my_label}
\end{figure}

You can observe that even though $g_2$ deserves the opportunity and is active on platform $\mathbb{L}$, the candidate has not been selected in top-6 when the FRR was applied on the platform $\mathbb{L}$. The main reason for this unfairness is that the \textit{representative ranking on platform $\mathbb{L}$ (LRR)} maintained the candidates' ratio for protected attributes based on $L$ and did not consider the universal representation ratio. Therefore, the eligible candidates miss the opportunity if other members belonging to that group either do not join or are sub-active on the platform.


This paper presents how the existing approach of fairness representative ranking is inadequate to generate a fair ranking given the protected attributes. We also present the definitions of unfairness that should be considered while proposing a fairness-aware ranking method. We discuss how the system should be extended to mitigate bias in ranking problems if the complete information is unknown. We further propose the updated definition of metrics in the case of missing data to mitigate the bias while evaluating a system's fairness. We discuss open challenges associated with the problem while applying fair representative ranking and further directions that might be considered for solving these challenges.

\section{Prerequisite}\label{prerequisite}

\subsection{Fair Representative Ranking}

The Fair Representative Ranking (FRR) method works as follows. 

\begin{enumerate}
    \item Partition the set of qualified candidates into different groups based on protected attribute values. For example, in the case of gender as a protected attribute, the candidates will be divided into two groups, males and females.
    
    \item Rank the candidates in each group according to their scores.
    
    \item To generate representative ranking, merge these groups in such a way that the proportion of each group in the ranked list for every index of the list is similar to their corresponding proportions in the set of qualified candidates. Note that the merging will preserve the ordering of the candidates that was followed within their group.
\end{enumerate}

\subsection{Terminologies}

\begin{itemize}
\item $r$: The search query (the set of skills required for a job).
\item $U=\{c_1, c_2, c_3, ....\}$: The universal set of eligible candidates for search query $r$.
\item $L$ $(L \subseteq U)$: The set of eligible candidates on platform $\mathbb{L}$.
\item $A=\{a_1, a_2, a_3, .... a_l\}$: The protected attribute (or a group of attributes) is represented by $A$ and it contains the value of protected attribute.
\item $A(c_i)$: The protected attribute value of candidate $c_i$.
\item For the given protected attributes $A$, the qualified candidates can be divided into $n$ groups $(G_1, G_2, ... G_n)$.
\item The fraction of eligible candidates for each protected value in $U$ is $\{p_{a_1}, p_{a_2}, ...,  p_{a_l}\}$. 
\item $\tau_{U,r, A}$ represents the representative ranking of candidates from set $U$ for the search query $r$, based on protected attributes $A$.
\item $\tau_{U,r, A}^k$ denotes top $k$ elements in the ranking.
\item $\tau_{U,r, A}(c_i)$ denotes the rank of candidate $c_i$, by the abuse of notation.
\item $\tau_{U,r, A}[i]$ denotes $i_{th}$ element of the ranking.
\item $\tau_{U,r, A}$ and $\tau_{L,r, A}$ are also referred as URR and LRR, respectively.
\end{itemize}

\section{Fairness Definitions}\label{problem} 

In this section, we discuss fairness and bias definitions from individual and group perspectives.

\subsection{Benefited Candidate}

If a candidate belongs to top-$k$ ranked candidates based on search query $r$, the candidate will be called the ``benefited candidate".

\begin{definition}\label{def1}
  For the given set $L$, skill set $r$, protected attribute $A$, and the selected candidates $k$, a candidate $c_i$ is benefited candidate if $c_i \in \tau^k_{L,r, A}$. 
\end{definition}

Note that while defining benefited candidate, the LRR is considered as the opportunity is being provided by platform $\mathbb{L}$.

\subsection{Individual Unfairness}

If a candidate has the universal representation ranking less than or equal to $k$, then the candidate should be benefited. However, if due to some reasons, on platform $\mathbb{L}$, the representative ranking of the candidate is greater than $k$, the candidate will not be benefited, and therefore it shows that the candidate has been treated unfairly by the platform $\mathbb{L}$. The example of Individual unfairness was discussed in the Introduction where $g_2$ was treated unfairly. 

\begin{definition}\label{def2}
  For the given set $U$ and $L$, search query $r$, and protected attribute $A$, if top-$k$ candidates are selected from $L$, the individual unfairness for a candidate $c_i$ is defined as, 
  
$IUF(c_i)=\left\{\begin{matrix} 1, & c_i \notin \tau^k_{L,r, A} and c_i \in \tau^k_{U,r, A} \\  0, & otherwise \end{matrix}\right.$
\end{definition}


\subsection{Favored Candidate}

If a candidate has universal representation ranking greater than $k$ then the candidate should not be benefited, but if on the platform $\mathbb{L}$, the representation ranking of the candidate is less than or equal to $k$, the candidate will be benefited, and therefore, this candidate will be referred to as `favored candidate'. In the example, candidate $b_5$ is the favored candidate as it would not have been benefited based on the universal representation ranking.

\begin{definition}\label{def3}
  For the given set $U$ and $L$, search query $r$, and protected attribute $A$, if top-$k$ candidates are selected from $L$, a candidate $c_i$ is favored candidate if, $c_i \in \tau^k_{L,r, A}$ and $c_i \notin \tau^k_{U,r, A}$.
\end{definition}

\subsection{Group Unfairness}

If the number of selected candidates from a group $G_i$ in LRR is less than the number of candidates that should have been selected based on URR, then the group $G_i$ has been treated unfairly on platform $\mathbb{L}$. For example, if 10 people from a group $G_i$ belong to top-$k$ people in set $U$ and less than 10 people from this group are selected from set L, then the representative ranking of set $L$ is called unfair for group $G_i$.

\begin{definition}\label{def5}
In top-$k$ selected candidates, group $G_i$ has been treated unfairly by platform $\mathbb{L}$, if  $|\{c_i | c_i \in G_i \; and \; c_i \in \tau^k_{L,r, A}\}| < |\{c_i | c_i \in G_i \; and \; c_i \in \tau^k_{U,r, A}\}|$.
\end{definition}

\subsection{Favored Group}

A group $G_i$ is favored if, in top-$k$ candidates, the number of benefited candidates from the group $G_i$ on platform $\mathbb{L}$ is more than the number of benefited candidates from the group $G_i$ on $U$. The favored group explains the opposite cases of group unfairness.

\begin{definition}\label{def7}
    In top-$k$ selected candidates, group $G_i$ is favored by platform $\mathbb{L}$, if  $|\{c_i | c_i \in G_i \; and \; c_i \in \tau^k_{L,r, A}\}| > |\{c_i | c_i \in G_i \; and \; c_i \in \tau^k_{U,r, A}\}|$.
\end{definition}

\subsection{Relation of different Definitions}\label{relations}

\subsubsection{Individual Unfairness v.s. Favored Candidate}

If $l$ candidates from group $G_1$ are treated individually unfair, then $l$ candidate from a group $G_2$ will be favored as the total number of selected candidates are $k$. 

\begin{proposition}
In top-k selected candidates, if a candidate $c_i$ from group $G_i$ is treated unfairly as defined in Definition \ref{def2}, then a candidate from any other group will be favored as defined in Definition \ref{def3}. 
\end{proposition}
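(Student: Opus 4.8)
The plan is to combine a one-line counting argument, which guarantees that a favored candidate exists at all, with a structural argument about how FRR fills each group, which forces that favored candidate into a group different from $G_i$. Throughout I would write $S_U = \tau^k_{U,r, A}$ and $S_L = \tau^k_{L,r, A}$ for the two top-$k$ sets, and recall from Section 2.1 that within any single group the candidates are ordered by score and the merge preserves this order, so the members of a group that appear in a top-$k$ list always form a score-prefix of that group, and the within-group order used by LRR is exactly the restriction to $L$ of the order used by URR.

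First I would pin down the cardinalities. If $c_i$ is treated unfairly then $c_i \in L$ but $c_i \notin S_L$, so $L$ contains at least $k+1$ candidates and hence $|S_L| = k$; since $L \subseteq U$ we also get $|U| \ge k$ and $|S_U| = k$. Because $|S_U| = |S_L| = k$, the two overhang sets have equal size, $|S_U \setminus S_L| = k - |S_U \cap S_L| = |S_L \setminus S_U|$. By Definition \ref{def2} the unfairly treated candidate $c_i$ lies in $S_U \setminus S_L$, so this set is non-empty; by the equality of sizes $S_L \setminus S_U$ is non-empty as well, and any of its elements is a favored candidate by Definition \ref{def3}. Thus a favored candidate exists.

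The main step, which I expect to be the real obstacle, is ruling out that a favored candidate lies in $G_i$ itself, so that it must come from another group. I would argue by contradiction: suppose some favored candidate $c_j \in G_i$ existed. Then $c_j \in S_L$ while $c_i \notin S_L$, and since both lie in $G_i \cap L$ and the selected members of $G_i$ form a score-prefix in LRR, $c_j$ must outrank $c_i$ within $G_i$. But $c_i \in S_U$, i.e.\ $c_i$ lies in the $G_i$-prefix chosen by URR, and $c_j \in U$ ranks above $c_i$ in $G_i$; hence $c_j$ would also lie in that prefix, giving $c_j \in S_U$ and contradicting the assumption that $c_j$ is favored. Therefore no favored candidate belongs to $G_i$.

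Combining the two parts yields the claim: at least one favored candidate exists, and it belongs to a group other than $G_i$. The counting step is routine; the care is needed in the structural step, specifically in justifying that the group-restricted top-$k$ is a score-prefix in both rankings, and that the presence of $G_i$-candidates in $U \setminus L$ does not reverse the relative order of $c_i$ and $c_j$ (since LRR orders $G_i \cap L$ as the restriction of URR's order), so the score comparison driving the contradiction remains valid.
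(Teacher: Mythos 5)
Your proof is correct, and it takes a genuinely different---and more careful---route than the paper's. The paper argues by contradiction for two groups only: if $c_i \in G_1$ is unfairly treated and no $G_2$ candidate is favored, it counts LRR's top-$k$ as $round(p_{a_1}\cdot k)-1$ candidates from $G_1$ plus $round(p_{a_2}\cdot k)$ from $G_2$, totalling $k-1$, a contradiction. The weak point there is the bare assertion that $c_i$'s slot ``will be vacant,'' i.e.\ that $G_1$ supplies exactly one fewer candidate on $\mathbb{L}$; the paper never explains why another member of $G_1$ cannot be promoted into that slot (equivalently, why no candidate of $G_i$ itself can be favored). That is precisely what your structural step proves, via the observation that the selected members of any group form a score-prefix of that group in both rankings and that LRR's within-group order is the restriction of URR's: a favored $c_j \in G_i$ would have to outrank $c_i$ within $G_i$ and hence already lie in $\tau^k_{U,r,A}$, contradicting that it is favored. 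Your existence step also differs in shape: rather than per-group proportion counting, you use the symmetric-difference identity $|S_U \setminus S_L| = |S_L \setminus S_U|$, which is cleaner and works verbatim for any number of groups, whereas the paper's argument is written for $n=2$. What the paper's route buys is brevity and direct contact with the proportions that FRR maintains; what yours buys is rigor on the one non-trivial step and generality. One caveat you share with the paper: both proofs read Definition \ref{def2} as requiring $c_i \in L$ (the paper implicitly, by writing $\tau_{L,r,A}(c_i) > k$; you explicitly, since your prefix argument needs $c_i, c_j \in G_i \cap L$). This reading is essential, since for an unfairly treated candidate who never joined $\mathbb{L}$, the replacement in LRR's top-$k$ can come from $G_i$ itself and the claimed cross-group favoring need not occur.
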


\begin{proof}
We prove it using proof of contradiction. Let's assume there are two groups and a candidate $c_i$ from group $G_1$ is treated unfairly and no candidate of group $G_2$ is favored. In top-k elements, URR will return $round(p_{a_1}*k)$ candidates from group $G_1$ and $round(p_{a_2}*k)$ candidates from group $G_2$, where $round(p_{a_1}*k) + round(p_{a_2}*k) =k$. 

Now, if the candidate $c_i$ Group $G_1$ is treated unfairly then $c_i \notin \tau^k_{L,r, A}$ and $\tau^k_{L,r, A} (c_i) > k$, so the place of $c_i$ candidate in top-$k$ candidates will be vacant. Therefore, the platform $\mathbb{L}$ will return $round(p_{a_1}*k)-1$ candidates from group $G_1$ and $round(p_{a_2}*k)$ candidates from group $G_2$ as no candidate from group $G_2$ is favored. 
Now, the total number of returned candidates by platform $\mathbb{L}$ is $round(p_{a_1}*k)-1 + round(p_{a_2}*k) = k-1$. That is a contradiction as top-$k$ elements were selected by $L$. Therefore, by the proof of contradiction, an element from group $G_2$ will be favored.
\end{proof}

\subsubsection{Favored Candidate v.s. Individual Unfairness}

Intuitively it seems, If $l$ candidates from group $G_2$ will be favored, then $l$ candidates from group $G_1$ are treated individually unfairly as the total number of selected candidates are $k$. However, this is \textbf{not always true.}

\begin{proposition}\label{prop2}
In top-k selected candidates, if a candidate from group $G_i$ is favored as defined in Definition \ref{def3} then it does not imply that a candidate from another group will be treated unfairly as defined in Definition \ref{def2}.
\end{proposition}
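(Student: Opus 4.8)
The statement is a non-implication, so the natural route is to exhibit a single counterexample: an instance of $U$, $L$, $r$, $A$, and $k$ in which some candidate is favored in the sense of Definition~\ref{def3} while no candidate is individually unfair in the sense of Definition~\ref{def2}. I would not attempt a general argument, because a simple counting observation (both $\tau^k_{U,r,A}$ and $\tau^k_{L,r,A}$ have exactly $k$ elements, hence $|\tau^k_{L,r,A}\setminus\tau^k_{U,r,A}| = |\tau^k_{U,r,A}\setminus\tau^k_{L,r,A}|$) shows that favored candidates and ``URR-only'' candidates always come in equal numbers. The whole subtlety therefore lies in distinguishing the URR-only candidates who are actually present on $\mathbb{L}$ (these are the individually unfair ones) from those who are absent from $\mathbb{L}$ altogether.

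This immediately dictates the shape of the counterexample, and it is exactly the sub-active-group phenomenon of the Introduction. First I would take two groups $G_1$ and $G_2$ and let the top-ranked member of $G_2$ in the universe, say $g_1$, be a candidate who does not join the platform, while a lower-ranked member $g_2$ does join. Concretely one can use $|G_1|=4$ and $G_2=\{g_1,g_2\}$ in $U$, with only $\{g_2\}$ from $G_2$ present in $L$, and set $k=3$. Computing both rankings is then routine: with the universal proportions $\tau^3_{U,r,A}$ allocates two slots to $G_1$ and one to $G_2$, giving the top two of $G_1$ together with $g_1$; with the platform proportions $\tau^3_{L,r,A}$ again allocates one slot to $G_2$, which can now only be filled by $g_2$.

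The verification then has two parts. I would first check that $g_2$ is favored: $g_2\in\tau^3_{L,r,A}$ but $g_2\notin\tau^3_{U,r,A}$, which holds by construction. I would then check that no candidate is individually unfair: the only element of $\tau^3_{U,r,A}\setminus\tau^3_{L,r,A}$ is $g_1$, and $g_1$ is not present on $\mathbb{L}$, so it has no LRR rank and cannot satisfy the condition $\tau_{L,r,A}(c_i)>k$ that underlies Definition~\ref{def2}; meanwhile the two $G_1$ candidates in $\tau^3_{U,r,A}$ remain in $\tau^3_{L,r,A}$. Hence a favored candidate exists while $IUF(c_i)=0$ for every candidate on the platform, and the implication fails.

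The step I expect to require the most care is the last one, namely pinning down that the displaced universal-top-$k$ candidate $g_1$ is genuinely outside the scope of Definition~\ref{def2}. Read purely symbolically, $g_1\in\tau^k_{U,r,A}$ and $g_1\notin\tau^k_{L,r,A}$ would seem to force $IUF(g_1)=1$; the point is that individual unfairness is meant relative to the platform (the prose speaks of the candidate's LRR rank exceeding $k$, which presupposes presence on $\mathbb{L}$, and the motivating example stresses that $g_2$ is \emph{active} on the platform). I would therefore state explicitly that Definition~\ref{def2} applies only to candidates in $L$, and emphasize that it is precisely because the vacated slot traces back to an \emph{absent} candidate, rather than to a displaced on-platform candidate, that being favored does not force individual unfairness on anyone.
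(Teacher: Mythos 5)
Your proposal is correct and takes essentially the same route as the paper: the paper also argues by explicit counterexample built on a sub-active group whose top-ranked universal candidate never joins $\mathbb{L}$ (there, only $g_2$ and $g_4$ join, so $b_5$ is favored in top-6 while no one on the platform is treated unfairly). Your closing remark that Definition~\ref{def2} must be read as applying only to candidates actually present in $L$ --- otherwise the absent $g_1$ would formally satisfy $IUF(g_1)=1$ and the counterexample would collapse --- makes explicit a reading the paper's proof relies on but never states.
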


\begin{proof}
We prove it using the proof of construction. Let's assume, in the example (Figure 1), only $g_2$ and $g_4$ joins the platform $\mathbb{L}$. Now $(G_1:G_2 =10:2)$, so, the new representative ranking in set $L$ is: $b_1 > b_2 > b_3 > g_2 > b_4 > b_5 > b_6 > b_7 > b_8 > g_4 > b_9 > b_{10}$. If top-6 candidates are selected,
\begin{itemize}
    \item $U$ will return: $b_1 > g_1 > b_2 > b_3 > g_2 > b_4$
    \item $L$ will return: $b_1 > b_2 > b_3 > g_2 > b_4 > b_5$
\end{itemize}
As you can see that the candidate $b_5$ is favored, but no candidate of group $G_2$ is treated unfairly. Hence proved.
\end{proof}

\subsubsection{Individual Unfairness v.s. Group Unfairness}

\begin{lemma}
If there is individual unfairness (as defined in Definition \ref{def2}) for a candidate $c_i$ and $c_i \in G_i$, it implies the group unfairness (as defined in Definition \ref{def5}) for group $G_i$.
\end{lemma}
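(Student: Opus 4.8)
The plan is to reduce both notions to statements about the \emph{within-group rank} of $c_i$, i.e.\ its position by score among the members of $G_i$, and then to exploit two structural facts about FRR. First I would set $m_U = |\{c \in G_i : c \in \tau^k_{U,r,A}\}|$ and $m_L = |\{c \in G_i : c \in \tau^k_{L,r,A}\}|$. By Definition~\ref{def5}, group unfairness for $G_i$ is exactly the inequality $m_L < m_U$, so the whole lemma amounts to deriving this single inequality from the hypothesis $IUF(c_i)=1$.

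The first structural fact I would use is that FRR, by step~3 of its construction, preserves the within-group ordering and allocates a fixed number of slots to each group. Hence the $G_i$-members appearing in $\tau^k_{U,r,A}$ are precisely the $m_U$ highest-scoring members of $G_i$, and likewise those in $\tau^k_{L,r,A}$ are the $m_L$ highest-scoring members of $G_i \cap L$. Writing $t_U$ and $t_L$ for the within-group rank of $c_i$ in $U$ and in $L$ respectively, top-$k$ membership becomes a clean threshold on rank: $c_i \in \tau^k_{U,r,A}$ iff $t_U \le m_U$, and $c_i \notin \tau^k_{L,r,A}$ iff $t_L > m_L$. The hypothesis $IUF(c_i)=1$ (Definition~\ref{def2}) therefore supplies $t_U \le m_U$ and $t_L > m_L$ simultaneously; note this implicitly uses that $c_i$ is active on $\mathbb{L}$ so that $t_L$ is defined, consistent with the motivating scenario in which $g_2$ has a rank in LRR but lies below position $k$.

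The second, and decisive, fact is a monotonicity of rank under restriction: since $L \subseteq U$ we have $G_i \cap L \subseteq G_i$, so deleting candidates can only move $c_i$ up (or leave it fixed) within its own group, giving $t_L \le t_U$. Chaining the three relations yields $m_L < t_L \le t_U \le m_U$, i.e.\ $m_L < m_U$, which is precisely group unfairness for $G_i$.

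I expect the main obstacle to be making the two structural facts rigorous rather than the final chain of inequalities. Specifically, the threshold characterization requires arguing carefully from the FRR merging rule that the members selected from any group are exactly its top scorers, and the monotonicity step requires verifying that removing higher-scoring group-mates when passing from $U$ to $L$ never demotes $c_i$. One should also dispose of the degenerate reading of Definition~\ref{def2} in which $c_i \notin L$, where $t_L$ is undefined; the cleanest resolution is to adopt the intended interpretation (as in the example) that individual unfairness concerns a candidate present in LRR but ranked beyond $k$, so that the rank-threshold argument applies directly.
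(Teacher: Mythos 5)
Your proof is correct, but be aware that the paper does not actually prove this lemma: the authors' entire argument is the single sentence ``This can be concluded using Definition~\ref{def5}.'' Your route --- characterizing top-$k$ membership in a representative ranking as a threshold on within-group rank (justified by the fact that the FRR merge preserves within-group order, so the $G_i$-members of $\tau^k_{U,r,A}$ are exactly the $m_U$ top scorers of $G_i$), and then chaining $m_L < t_L \le t_U \le m_U$ via monotonicity of within-group rank under the restriction $G_i \cap L \subseteq G_i$ --- is therefore strictly more than the paper offers, and it is sound. Moreover, your closing caveat about the case $c_i \notin L$ is not pedantry but essential: under the literal reading of Definition~\ref{def2}, the lemma is false. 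Take two groups of $100$ candidates each in $U$ and $k=2$, so URR's top-$2$ contains one candidate from each group, say $b_1$ and $g_1$; if only $g_1$ fails to join $\mathbb{L}$, the proportions on $L$ are $100\!:\!99$, so LRR's top-$2$ still contains one candidate from each group, namely $b_1$ and $g_2$. Then $IUF(g_1)=1$ literally (since $g_1 \in \tau^2_{U,r,A}$ and $g_1 \notin \tau^2_{L,r,A}$), yet both rankings have exactly one $G_2$ candidate in their top-$2$, so there is no group unfairness. The restriction to candidates active on $\mathbb{L}$ --- which the paper itself implicitly adopts later, when its proof that individual fairness does not imply group fairness declares that candidates of $\tau^k_{U,r,A}$ who never join $\mathbb{L}$ suffer no individual unfairness --- is exactly what makes the statement true, and your proof is the one that makes this visible.
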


This can be concluded using Definition \ref{def5}.

\subsubsection{Favored Group v.s. Group Unfairness}


\begin{lemma}
If there exists a group $G_i$ that is favored group, then it implies that there will also exist a group $G_j$ that suffers through group unfairness.
\end{lemma}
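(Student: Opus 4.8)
The plan is to reduce the statement to a simple conservation-of-count (pigeonhole) argument on the per-group slot allocations. First I would fix notation: for each group $G_i$ let $\ell_i = |\{c \mid c \in G_i \text{ and } c \in \tau^k_{L,r, A}\}|$ denote the number of candidates of $G_i$ benefited under LRR, and let $u_i = |\{c \mid c \in G_i \text{ and } c \in \tau^k_{U,r, A}\}|$ denote the number benefited under URR. In this language, Definition~\ref{def7} says that $G_i$ is favored exactly when $\ell_i > u_i$, while Definition~\ref{def5} says that $G_j$ suffers group unfairness exactly when $\ell_j < u_j$. The lemma therefore asserts: if some $\ell_i > u_i$, then some $\ell_j < u_j$.

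The key observation is that both rankings fill exactly $k$ slots, distributed among the $n$ groups, so $\sum_{i=1}^{n} \ell_i = k = \sum_{i=1}^{n} u_i$. Subtracting the two identities gives $\sum_{i=1}^{n} (\ell_i - u_i) = 0$. Now suppose $G_i$ is a favored group, so that $\ell_i - u_i > 0$. If no group were treated unfairly, then $\ell_j - u_j \ge 0$ would hold for every $j$; combined with the strictly positive contribution from $G_i$, this forces $\sum_{i=1}^{n} (\ell_i - u_i) > 0$, contradicting the zero-sum identity. Hence at least one index $j$ must satisfy $\ell_j - u_j < 0$, which is precisely group unfairness for $G_j$, completing the proof.

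The only point requiring a moment of care—and the closest thing to an obstacle—is justifying that both sums equal $k$ exactly, rather than merely being bounded by $k$. I would argue this from the fact that top-$k$ selection under both URR and LRR returns a full list of $k$ candidates, as already used in the earlier proof where the group quotas $\mathrm{round}(p_{a_1}\cdot k) + \mathrm{round}(p_{a_2}\cdot k) = k$; the per-group counts $\ell_i$ (respectively $u_i$) thus partition the $k$ selected candidates and must sum to $k$ in each case. Once this partition property is in place, the remainder is the single pigeonhole step above, and no estimation or case analysis on the rounding is needed.
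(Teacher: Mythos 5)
Your proof is correct and is essentially the same argument as the paper's: both proceed by contradiction from the fact that URR and LRR each allocate exactly $k$ slots among the groups, so a strict surplus for a favored group forces a strict deficit somewhere else. Your formulation via $\sum_i(\ell_i - u_i) = 0$ with inequalities is in fact slightly more careful than the paper's version, which tacitly treats the favored group's count as exactly $l^U_{G_i}+1$ and the remaining groups' counts as exactly $l^U_{G_j}$ rather than as bounds, but the underlying idea is identical.
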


\begin{proof}
This will be proved by contradiction. Let's assume that no group is treated unfairly on platform $\mathbb{L}$ given that the group $G_i$ is favored. There are $n$ groups and in top-$k$ elements, the number of selected candidates from each group are $(l^U_{G_1}, l^U_{G_2}, ..., l^U_{G_n})$ using URR.

If group $G_i$ is favored on platform $\mathbb{L}$, then it implies that there is a candidate $c_i$ from group $G_i$ that is favored. Therefore, the number of selected candidates from group $G_i$ on platform $\mathbb{L}$ is $l^U_{G_i} + 1$. If no group is treated unfairly on $\mathbb{L}$, the total number of selected candidates on platform $\mathbb{L}$ is $(l^U_{G_1}, l^U_{G_2} + ... + l^U_{G_i} + 1+ \cdots + l^U_{G_n}) =k+1$. This is a contradiction to the condition that top-$k$ candidates were chosen. Hence, by contradiction, a group $G_j$ would have been treated unfairly.
\end{proof}

\subsubsection{\textbf{Individual Fairness v.s. Group Fairness}}

This is important to understand how individual and group fairness are correlated so that the appropriate solutions can be designed. A solution designed for individual fairness may not provide group fairness. However, a solution designed for group fairness leads to individual fairness. We will prove these two lemmas next. 

\begin{lemma}
The individual fairness-aware solution does not imply group fairness. 
\end{lemma}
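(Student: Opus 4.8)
The plan is to prove this non-implication by construction: I will exhibit one instance in which an individual fairness-aware ranking satisfies individual fairness yet still violates group fairness. The first thing I would fix is the reading of ``individual fairness-aware solution.'' Following Definition \ref{def2}, such a solution is one that drives $IUF(c_i)$ to $0$ for every candidate it is able to act on, i.e.\ for every candidate actually present on the platform: every $c_i \in L$ with $c_i \in \tau^k_{U,r,A}$ must also be placed in $\tau^k_{L,r,A}$. The scoping to $c_i \in L$ is essential, since the platform cannot remedy a deserving candidate who never joined.

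Before building the example, I would record the structural reason the counterexample must exploit a sub-active group. If every member of $\tau^k_{U,r,A}$ were present in $L$, then individual fairness would force $\tau^k_{U,r,A} \subseteq \tau^k_{L,r,A}$, and since both are top-$k$ sets of the same size this forces equality, hence equal per-group counts and no group unfairness. So the construction must push at least one deserving member of some group off the platform, which is exactly the sub-active situation of Figure \ref{fig:my_label}.

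Next I would instantiate that scenario. Take $G_1=\{b_1,\dots,b_{10}\}$ and $G_2=\{g_1,\dots,g_5\}$ with the scores stated in the Introduction, so the universal ratio is $10:5$ and $\tau^6_{U,r,A}=\{b_1,g_1,b_2,b_3,g_2,b_4\}$, giving $G_2$ two slots. I would let $G_2$ be sub-active so that only $g_1$ joins $L$ while all of $G_1$ joins. The individual fairness-aware solution then keeps every present deserving candidate, namely $b_1,b_2,b_3,b_4$ and $g_1$, and fills the last top-$6$ position with the next present candidate $b_5$, yielding $\tau^6_{L,r,A}=\{b_1,b_2,b_3,b_4,b_5,g_1\}$. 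I would then verify both claims on this instance: for individual fairness, every candidate present on $L$ that lies in $\tau^6_{U,r,A}$ (that is, $b_1,b_2,b_3,b_4,g_1$) also lies in $\tau^6_{L,r,A}$, so $IUF(c_i)=0$ for all $c_i\in L$; for group fairness, $|\{c_i \mid c_i \in G_2 \; and \; c_i \in \tau^6_{L,r,A}\}| = 1 < 2 = |\{c_i \mid c_i \in G_2 \; and \; c_i \in \tau^6_{U,r,A}\}|$, so by Definition \ref{def5} group $G_2$ is treated unfairly. This single instance establishes the non-implication.

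I expect the main obstacle to be conceptual rather than computational: pinning down the scoping of ``individual fairness-aware solution'' so that the statement is simultaneously true and non-vacuous. Under a literal, unscoped reading of Definition \ref{def2}, a deserving but absent candidate such as $g_2$ already forces $IUF(g_2)=1$, so individual fairness would be unachievable and the claim would become trivial; restricting the guarantee to candidates present on $L$ is what makes the counterexample legitimate and what cleanly separates it from the degenerate case (all deserving candidates present) in which individual fairness would in fact coincide with group fairness.
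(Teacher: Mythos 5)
You take the same route as the paper: a counterexample built on a sub-active group, with individual fairness read as scoped to the candidates actually present on $\mathbb{L}$ (the paper's proof makes exactly this implicit scoping when it declares a solution individual-fair after the set $H$ of deserving $G_i$ members fails to join). Your structural preamble and your discussion of the scoping ambiguity in Definition \ref{def2} are sound, and under the literal count comparison of Definition \ref{def5} your instance does witness the non-implication.

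However, your specific instance is degenerate in a way the paper's is not. You let only $g_1$ from $G_2$ join $L$, so at most one $G_2$ candidate can appear in \emph{any} top-$6$ ranking of $L$; group fairness in the sense of Definition \ref{def5} is then infeasible for every solution, not merely violated by the individual-fair one. Your counterexample thus shows the implication failing only because group fairness cannot hold at all, which does not genuinely separate the two notions, and it is fragile: under the paper's own exhaustion-aware refinement of group fairness (the $\min(\cdot,\cdot)$ condition in the ideal-GFRR subsection, which excuses a group whose platform members are all already ranked), your individual-fair solution would count as group-fair and the counterexample collapses. The paper's construction avoids this by removing from $\mathbb{L}$ only the set $H$ of $G_i$'s top-$k$ URR members while leaving lower-ranked members of $G_i$ available on the platform: an individual-fair solution may then select no $G_i$ candidate at all, while a group-fair solution (placing $l^U_{G_i}$ lower-ranked members) does exist. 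Your example is repaired by one change: let $g_3$ also join $L$. Then $(b_1,b_2,b_3,b_4,b_5,g_1)$ remains individual-fair, since the only present deserving candidates are $b_1,b_2,b_3,b_4,g_1$ (note $g_2 \notin L$), but a group-fair solution such as $(b_1,g_1,b_2,b_3,g_3,b_4)$ now exists, and your solution is group-unfair under either reading of Definition \ref{def5}.
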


\begin{proof}
Let's assume, in top-$k$ candidates, the number of chosen candidates by URR from groups $(G_1, \cdots, G_i, \cdots, G_2)$ are \\
$(l^U_{G_1}, \cdots, l^U{G_i}, \cdots, l^U_{G_2})$, respectively, and the candidates in set $H$ ($H=\{c_i | c_i \in G_i \; and \; c_i \in \tau^k_{U,r,A}\}$) does not join the platform $\mathbb{L}$. 

In this case, even if no candidate of $G_i$ is chosen by an algorithm in top-$k$ candidates, the algorithm is still providing an individual-fair solution as there is no individual unfairness as per Definition \ref{def2}; however, it is not a group-fair solution, as $l^U_{G_i}$ candidates from group $G_i$ should have been chosen in top-$k$ for adhering the group fairness as per the Definition \ref{def5}. Hence proved.
\end{proof}

\begin{lemma}
Group fairness-aware solution will imply individual fairness.
\end{lemma}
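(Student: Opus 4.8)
The plan is to give a direct proof rather than an argument by contradiction. The first move is to convert the \emph{group} fairness hypothesis into an exact per-group counting statement, and then to exploit the fact that deleting candidates when passing from $U$ to $L \subseteq U$ can only move a surviving candidate \emph{up} within the score order of its own group. Throughout I would assume, as in the FRR construction of Section~2.1, that the solution is a representative ranking, i.e.\ within each group it retains candidates in decreasing score order; and I would follow the reading of individual (un)fairness used in the previous lemma, where a deserving candidate who never joins $\mathbb{L}$ is not counted as treated unfairly, so that individual unfairness is assessed only for candidates in $L$.

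Let $l^U_{G_i} = |\{c \in G_i : c \in \tau^k_{U,r,A}\}|$ and $l^L_{G_i} = |\{c \in G_i : c \in \tau^k_{L,r,A}\}|$. First I would observe that group fairness means no group is treated unfairly in the sense of Definition~\ref{def5}, i.e.\ $l^L_{G_i} \ge l^U_{G_i}$ for every $i$. Since both rankings return exactly $k$ candidates, $\sum_i l^L_{G_i} = k = \sum_i l^U_{G_i}$; nonnegative terms that dominate term-by-term yet have equal totals must agree term-by-term, so $l^L_{G_i} = l^U_{G_i}$ for every group $G_i$. This is the key reduction: group fairness forces each group to keep \emph{exactly} its universal quota on the platform.

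Next I would take an arbitrary on-platform candidate $c$ that could in principle be unfairly treated, i.e.\ $c \in \tau^k_{U,r,A}$ with $c \in G_i$ and $c \in L$, and show $c \in \tau^k_{L,r,A}$. Because $\tau_{U,r,A}$ preserves within-group score order, $c \in \tau^k_{U,r,A}$ means $c$ is among the top $l^U_{G_i}$ highest-scoring members of $G_i$. The members of $G_i \cap L$ that outscore $c$ form a subset of the members of $G_i$ that outscore $c$ (as $L \subseteq U$), so the rank of $c$ inside $G_i \cap L$ is no larger than its rank inside $G_i$; hence $c$ is among the top $l^U_{G_i} = l^L_{G_i}$ highest scorers of $G_i \cap L$. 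Since $\tau_{L,r,A}$ admits exactly the top $l^L_{G_i}$ scorers of each group into its top-$k$, we get $c \in \tau^k_{L,r,A}$, so $IUF(c)=0$. As $c$ was arbitrary, the solution exhibits no individual unfairness.

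\textbf{Main obstacle.} I expect the delicate point to be not the algebra but the modeling conventions around candidates absent from $\mathbb{L}$. Under a literal reading of Definition~\ref{def2}, a deserving candidate in $\tau^k_{U,r,A}$ who simply never joins the platform satisfies $c \notin \tau^k_{L,r,A}$ and would be flagged as individually unfair, which group fairness cannot prevent; the implication only holds once one adopts the convention (used implicitly in the preceding lemma) that such absent candidates lie outside the scope of individual fairness. The second thing I would state carefully is that ``solution'' must mean a representative ranking that orders each group by score, since Definition~\ref{def5} constrains only counts: a contrived group-fair method that filled a group's quota with lower-scored members could strand a higher-scored on-platform candidate, and the monotonicity step in the middle paragraph is exactly where this ordering assumption is load-bearing.
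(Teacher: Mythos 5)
Your proof is correct and follows the same basic route as the paper's: group fairness pins each group's count in the platform's top-$k$ to its universal quota, and within-group score order then forces every on-platform deserving candidate into the top-$k$. The paper's own proof, however, is a two-sentence assertion that simply \emph{assumes} the solution ``maintains the ratio of candidates for each group'' and concludes; you add two pieces of genuine content it lacks. First, you derive exact quota equality $l^L_{G_i} = l^U_{G_i}$ from the weaker hypothesis actually given by Definition~\ref{def5} (the inequality $l^L_{G_i} \ge l^U_{G_i}$) via the sum-to-$k$ argument, so your proof works from the paper's stated definition of group fairness rather than from a stronger unstated quota property. Second, you make explicit the monotonicity step --- that passing from $U$ to $L \subseteq U$ can only improve a surviving candidate's within-group rank --- which is exactly where the paper's phrase ``while maintaining the ranking of the candidates in their respective groups'' is doing silent work. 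Your two flagged conventions are also the right ones: the paper's proof of the preceding lemma (individual fairness does not imply group fairness) confirms that candidates absent from $\mathbb{L}$ are not counted as individually unfairly treated, since otherwise that lemma's construction would itself exhibit individual unfairness; and the restriction to representative rankings (within-group score order) is load-bearing precisely as you say, because Definition~\ref{def5} constrains only counts and a count-only-fair solution could strand a high-scoring on-platform candidate.
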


\begin{proof}
In a group fairness-aware solution $R$, for a candidate $c_i$, if $\tau{U,r,A}(c_i) \leq k$ then from the definition of group fairness $R(c_i) \leq k$ as the ratio of candidates for each group is maintained in group fairness ranking while maintaining the ranking of the candidates in their respective groups. Now, as per the definition of individual fairness (refer Definition \ref{def2}), a solution is individual fairness aware, if $\tau_{U,r,A}(c_i) \leq k$ then $R(c_i) \leq k$, that is already true. Hence proved.
\end{proof}

\section{Fair Representative Ranking}

In this section, we discuss what is an ideal solution for fair representative ranking on a platform $\mathbb{L}$ if the data is missing. In the following subsections, we propose methods to generate \textit{individual fair representative ranking (IFRR)} and \textit{group fair representative ranking (GFRR)} that adhere to the definitions \ref{def2} and \ref{def5}, respectively. The proposed ideal solution assumes that the information of universal data and the data of platform $\mathbb{L}$ is available. In the next section, we will propose solutions to generate fair representative rankings in the case of missing data that will compete with the Ideal fair representative rankings. 

\subsection{Individual Fair Representative Ranking}
A ranking is fair for an individual $c_i$ if  $c_i \in \tau_{U,r, A}^k$ then $c_i \in \tau_{L,r, A}^k$. In short, there should not be any candidate $c_i$ in the generated individual fair representative ranking $IFRR(c_i)$ such that the candidate satisfy the condition of individual unfairness $IUF(c_i)=1$ as defined in Definition \ref{def2}.

IFRR for platform $\mathbb{L}$ can be generated by projecting URR on the candidates of set $L$. It is denoted by $IFRR^L$, and the algorithm is given in Algo \ref{algo-ifrr}. In the algorithm, $range(1,n)$ function iterates from $1$ to $n$ and $len(\tau_{U,r, A})$ function returns the length of the given list. In steps 2 to 4, we iterate over each candidate $c_i$ of the URR, and if the candidate belongs to set $L$ ($c_i \in L$), it is added to $IFRR^L$.

\begin{algorithm}[t]
\caption{$Ideal-IFRR(\tau_{U,r, A},L)$} 
\label{algo-ifrr}
\SetKwInOut{Input}{Input}\SetKwInOut{Output}{Output}
\DontPrintSemicolon
  \Input{$\tau_{U,r, A}$: universal representative ranking, $L$: set of users on platform $\mathbb{L}$} 
  
  \Output{$IFRR^L$: Individual fair representative ranking on $\mathbb{L}$}
  $IFRR^L$=[ ] \;
    \For{$k$ in range($1$, len($\tau_{U,r, A}$))}    
    { 
    	  \If{$\tau_{U,r, A}[k] \in L$}
            {
                append $\tau_{U,r, A}[k]$ to $IFRR^L$
            }
    }
    return $IFRR^L$
\end{algorithm}


\subsection{Group Fair Representative Ranking}

A ranking solution is group fair representative ranking (GFRR) if no group $G_i$ satisfies the condition of group unfairness as defined in Definition \ref{def5}. An ideal-GFRR solution ensures that the number of candidates from a group $G_i$ in GFRR should be equal to the number of candidates from the group $G_i$ in URR, except if all the members of this group are already ranked in GFRR. Therefore, for each index $k$, an ideal-GFRR solution will maintain the group fairness for each group $G_i$ until all the elements of that group are exhausted. Therefore, a ranking is fair for a group $G_i$ having protected value $a_i in A$, if $|{c_i \; s.t. \; c_i \in \tau_{L,r, A}^k and A(c_i)=a_i}|$=\\$min(|{c_i \; s.t. \; c_i \in \tau_{U,r, A}^k \; and \; A(c_i)=a_i}|$, $|{c_i \; s.t. \; c_i \in L \; and \; A(c_i)=a_i}|)$.

It is denoted by $GFRR^L$, and the method is explained in Algorithm  \ref{algo-gfrr}. For each index $k$ of URR, if the candidate $\tau_{U,r, A}[k]$ is in set $L$, it is added to $GFRR^L$; otherwise, the next eligible candidate of this protected group is added to $GFRR^L$. The $getAttribute(c)$ function returns the protected attribute of candidate $c$, $getNext(R,a)$ function returns the first element of ranked list $R$ having protected attribute $a$, if there is no remaining element having this attribute, it returns `NA'. In Algorithm \ref{algo-gfrr}, step 4 generates Ideal-IFRR that will only contain the candidates belonging to set $L$, as it will help in quickly run the $getNext$ function to identify the next ranked element of the given protected attribute. In step 3 to 8: step 3 iterates over each candidate of URR, step 4 identifies the protected attribute of the iterated candidate, step 5 gets the next eligible candidate of this protected group, step 6 will verify if a candidate is returned, step 7 will remove this candidate from $IFRR^L$, and step 8 will add it to $GFRR^L$.

\begin{algorithm}[t]
\caption{$Ideal-GFRR(\tau_{U,r, A},L,A)$}
\label{algo-gfrr}
\SetKwInOut{Input}{Input}\SetKwInOut{Output}{Output}
\DontPrintSemicolon
  \Input{$\tau_{U, r, A}$: universal representative ranking, $L$: set of users on platform $\mathbb{L}$, $A$: set of protected attribute}
  
  \Output{$GFRR^L$ Group fair representative ranking on $\mathbb{L}$}
  $GFRR^L$=[ ] \;
  $IFRR^L$=$Ideal-IFRR^L(\tau_{U,r, A},L)$ \;
    \For{$k$ in range(1, len($\tau_{U,r, A}$))}    
    { 
        $a=getAttribute(\tau_{U,r, A}[k])$\;
        $c$= $getNext(IFRR^L,a)$\;
        \If{$c$ != `NA'}
        {
            remove $c$ from $IFRR^L$ \;
            append $c$ to $GFRR^L$ \;
        }
    }
    return $GFRR^L$ \;
\end{algorithm}


\section{Problem Statement}

By far, we have discussed how the representative ranking is unfair on a platform $\mathbb{L}$ if some of the candidates do not join $\mathbb{L}$; the main reason for the unfairness is that the information of universal data is missing. If the universal representation ratio is known, then the ideal IFRR and GFRR can be generated using Algorithm 1 and 2. However, in most real-world applications, the information of the universal dataset is unknown. For example, for a given skill set $s$, how many males and females candidates are eligible for the job, that shows the universal representation ratio, might not be known to any given platform $\mathbb{L}$.

\textbf{Problem Statement.} \textit{Given set $L$ containing eligible candidates for a search query $r$, protected attribute set $A$, and the number of selected candidates $k$, propose methods to generate IFRR and GFRR.}


The proposed solutions should maintain IFRR or GFRR based on the representation ratio of different classes of protected attributes. Therefore, the solutions are categorized into two following classes.

\begin{enumerate}
    \item If universal representation ratio is known: In this work, we will propose solutions and possible approaches to generate optimal IFRR and GFRR solutions if the universal representation ratio is known. 
    \item If universal representation ratio is unknown: For such cases, we first need to estimate the universal representation ratio, and then the solutions proposed for the first category can be applied. The solutions to estimate representation ratio can be designed using statistics-based techniques or using surveys. This is an open question and not covered in this paper.
\end{enumerate}

\section{The Proposed Solutions}\label{solution}

In this section, first, we discuss a method to generate an optimal solution for GFRR. The solution for generating optimal IFRR is not straightforward as IFRR ranking is highly dependent on the kind of missing data; we will discuss these complexities and possible IFRR solutions in the next subsection. 

\subsection{Generate GFRR}

If the universal representation ratio is known, then a method to generate GFRR will maintain the ratio of candidates based on protected attribute classes for each index $k$. The method is inspired from \cite{geyik2019fairness}, and the solution is explained in Algorithm \ref{generategfrr}. The generated GFRR ranking is represented by $\widetilde{GFRR^L}$. In the algorithm, function $score(a_i, counts[a_i])$ returns the score of next eligible candidate from protected class $a_i$ (or the score of $counts[a_i]$ ranked candidate from the group $G_i$), $getNextCand(a_i)$ function returns the next eligible candidate of group $G_i$ having protected attribute $a_i$ (or $counts[a_i]$ ranked candidate from the group $G_i$).

The generated solution will be the same as an Ideal-GFRR; as both $\widetilde{GFRR^L}$ and ${GFRR^L}$ rankings maintain the representation ratio for each index $k$.

\begin{algorithm}[t]
\caption{Generate-$\widetilde{GFRR^L}$}
\label{generategfrr}
\SetKwInOut{Input}{Input}\SetKwInOut{Output}{Output}
\DontPrintSemicolon
  \Input{Protected attribute $A=\{a_1, a_2, ..., a_l\}$, universal representation ratio $\{p_{a_1}, p_{a_2}, ...,  p_{a_l}\}$, Elements for each class of protected attribute: $\{n_{a_1}, n_{a_2}, ...,  n_{a_l}\}$} 
  
  \Output{Representative ranking $\tau$ for group fairness}
  $\widetilde{GFRR^L}$ =[] \; 
  $counts=\{\}$ \\
  \For{each $a_i$ in A}    
    { 
    	$counts[a_i]=0$
    }
    $N=\sum_{a_i}n_{a_i}$ \\
    \For{k in range(1,N)}    
    { 
    	$underRep=\{a_i | counts[a_i]<\left \lfloor k \cdot p_{a_i}\right \rfloor \; and \; counts[a_i] < n_{a_i}\}$ \;
    	$underLimit=\{a_i | counts[a_i] \geq \left \lfloor k \cdot p_{a_i}\right \rfloor \; and \; counts[a_i] < \left \lceil k \cdot p_{a_i}\right \rceil \; and \; counts[a_i] < n_{a_i}\}$ \;
    	  \If{$underRep \neq \phi$}
            {
                $selectAttr=argmax_{a_i \in underRep}score(a_i, counts[a_i])$ \;
            }
            \Else
            {
            	$selectAttr=argmin_{a_i \in underLimit}\frac{\left \lceil k \cdot p_{a_i}\right \rceil}{p_{a_i}} $\;
            }
        $selectCand=getNextCand(selectAttr)  $ \;
        $\widetilde{GFRR^L}[k]=selectCand$ \;
        $counts[selectAttr]=counts[selectAttr]+1$\;
    }
    return $\widetilde{GFRR^L}$\;
\end{algorithm}

\subsection{Generate IFRR}

The solutions to generate IFRR are not easy, even if the universal representation ratio is known. We will explain it with the help of an example. Let's assume, in the universal set $U$ there are two groups $(G_1, G_2)$ and for search query $r$, in group $G_1$, there are four eligible candidates $(b_1, b_2, b_3, b_4)$ and in group $G_2$, there are four eligible candidates $(g_1, g_2, g_3, g_4)$. We assume that in each group the rank of $c_i > c_j$ if $i<j$. So, IFRR on set $U$ is $(g_1, b_1, g_2, b_2, g_3, b_3, g_4, b_4)$ if group $G_2$ is prioritized. Let's assume that four candidates from group $G_1$ and two candidates from group $G_2$ joins the platform $\mathbb{L}$, therefore $L= \{b_1, b_2, b_3, b_4, g'_1, g'_2\}$. Now, the question is: if the universal representation ratio is known as (0.5:0.5), how to generate an individual fair representative ranking. It is not straight forward as we do not know which two out of four members of group $G_2$ joined $\mathbb{L}$. Let's further understand this with case scenarios. In case 1: If $g'_1$ is $g_1$ and $g'_2$ is $g_2$, then the Ideal-IFRR will be $(g'_1, b_1, g'_2, b_2, b_3, b_4)$. In case 2: If $g'_1$ is $g_1$ and $g'_2$ is $g_4$, then the Ideal-IFRR will be $(g'_1, b_1, b_2, b_3, g'_2, b_4)$. Therefore, an IFRR solution also require the knowledge of missing data or the knowledge of what kind of candidates have joined $\mathbb{L}$ (metadata for platform $\mathbb{L}$).

The information of what kind of candidates have joined the platform $\mathbb{L}$ can either be gathered by performing surveys or by using statistics-based methods. The surveys can help in knowing what is the probability of a candidate joining $\mathbb{L}$ and what is its correlation with candidate's characteristics. For example, if we know that the higher ranked people are more likely to join a platform $\mathbb{L}$, then an estimated IFRR solution for the above discussed example is: $(g'_1, b_1, g'_2, b_2, b_3, b_4)$. In another approach, the statistical methods can be used to estimate the actual distribution using the partial available information. 

In this work, we show the results for the case when the candidates from a given group $G_i$ are missing uniformly at random. In the proposed solution, for each ranking place reserved for a sub-active group, its net ranked candidate will be placed with the probability equal to the fraction of active people of this group. The method is represented by $\widetilde{IFRR^L}$. The proposed solution can be extended for the cases where the probability of joining a candidate is correlated with its URR, such as (i) the probability of joining is directly proportional to score, (ii) the probability of joining is inversely proportional to score, or (iii) the probability of joining is higher for middle level candidates. The platform specific solutions for other cases are open questions. 

\section{Simulation Results}

\subsection{Metrics}

For the purposes of evaluation, we have used the following metrics.

\begin{enumerate}
    \item \textbf{Rank Difference:} This measure is used to capture the individual unfairness for ranking $R_1$ v.s. ranking $R_2$. It is computed as,
    
    \begin{equation}
        RankDiffer(c_i)= R_1(c_i) - R_2(c_i)
    \end{equation}
    
    The positive value shows that the individual has been treated unfairly in ranking $R_1$ with respect to ranking $R_2$, and the value shows the difference in the ranking. The negative value shows that the candidate is favored on $R_1$. 
    
    \item \textbf{Skew Ratio:} Skew ratio denotes the extent to which the top-$k$ elements for a given search query and for a given sub-group having the given attribute value $a_i$ differ from the desired proportion of that attribute group. It is used to measure group-unfairness, and it is defined as, \\
    
    \begin{equation}
        Skew_{R_1,R_2}^{a_i}=log_{e}\left ( \frac{|\{c_i \; s.t. \; c_i \in R_1^k \; and \; A(c_i)=a_i\}|}{|\{c_i \; s.t. \; c_i \in R_2^k \; and \; A(c_i)=a_i\}|} \right )
    \end{equation}
    
    $Skew_{R_1,R_2}^{a_i}$ is the (logarithmic) ratio of the proportion of candidates having the attribute value $a_i$ among the top-k ranked results to the desired proportion for $a_i$ in a ranking. A negative value corresponds to a lesser than desired representation of candidates with value $a_i$ in the top-k results, while a positive value corresponds to favoring such candidates. The log is used to make the skew values symmetric around the origin with respect to ratios for and against a specific attribute value $a_i$. 
    
    Note that the given formula doesn't work for $log(0)$ when the representation is zero, and in this case, we can assign a large value to this metric.
    
    
\end{enumerate}

\subsection{Datasets}
The experimental results are shown on synthetic datasets so that the impact of different activeness can be better studied and various values of protected attributes can be considered. The data has a protected attribute having 3 values $A=\{g,b,u\}$ and so the candidates are categorized into three groups $G_g, G_b, G_u$. A real-world example of this situation is gender classes having `female', `male', and `unknown' values. For a search query $r$, in each group, there are 10000 eligible candidates, and their scores are assigned uniformly at random from (0,1]. We assume that group $G_g$ is less active on the platform $\mathbb{L}$, and the ratio of active members ranges from 0.1 to 0.9, and all candidates of $G_b$ and $G_u$ join $\mathbb{L}$.

\subsection{Discussion}


We perform the experiment to show the number of candidates from group $G_g$ who miss the opportunity on platform $\mathbb{L}$ as $k$ (the number of selected candidates using representative ranking LRR) and activeness of group $G_g$ varies. The results are shown in Figure \ref{plot1}; the plot shows that the number of candidates losing the opportunity increases linearly with both parameters.

\begin{figure}
    \centering
    \includegraphics[width=\linewidth]{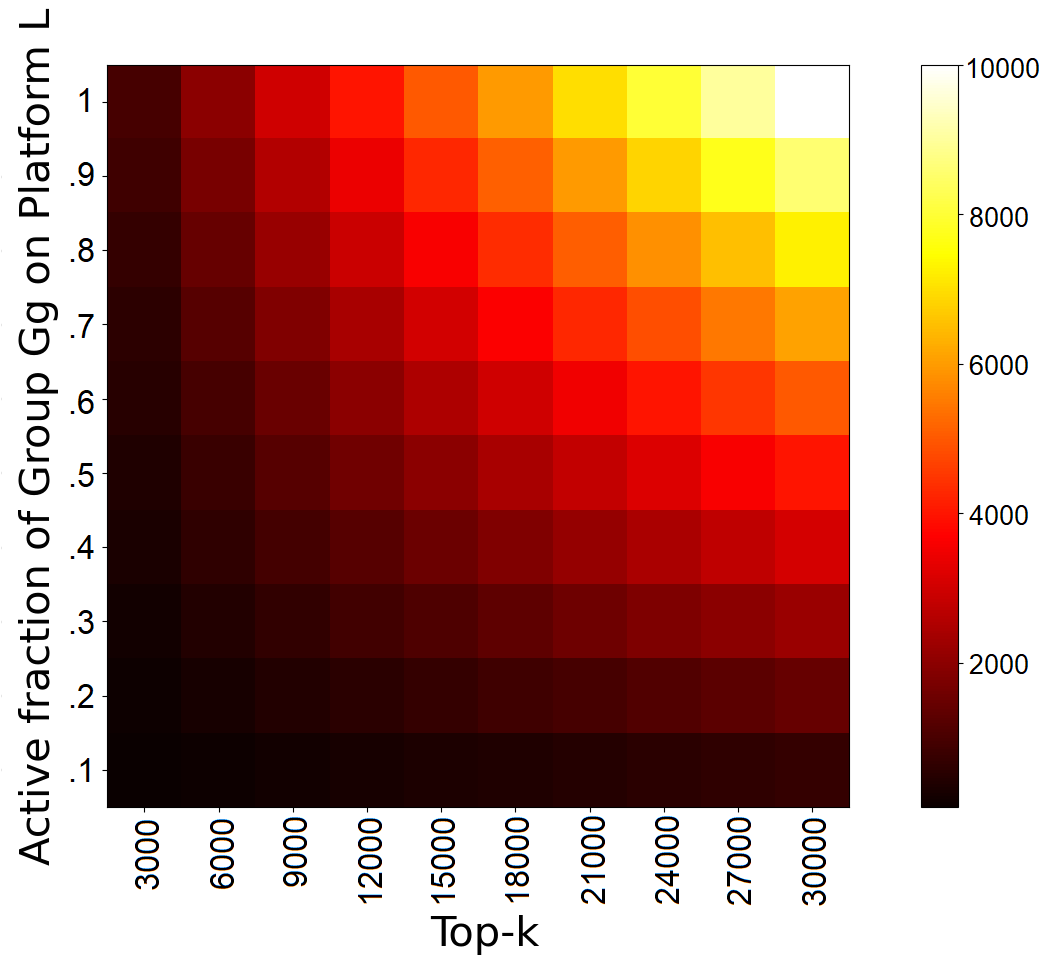}
    \caption{The number of candidates from group $G_g$ who miss the opportunity in top-$k$ selected candidates as $k$ and the percentage of active people vary.}
    \label{plot1}
\end{figure}

Figure \ref{rankdiff} shows the rank-difference of LRR and the proposed IFRR with Ideal-individual fairness representative ranking. The results show that the performance of LRR is poor for the activeness ratio from 0.4 if the candidates join u.a.r. However, detailed solutions will be proposed further. Figure \ref{skew} shows the skewness of LRR versus ideal-group fairness representative ranking, and the proposed method performs as well as the ideal method as both maintain the representation ratio. 

\begin{figure}[t]
    \centering
    \includegraphics[width=\linewidth]{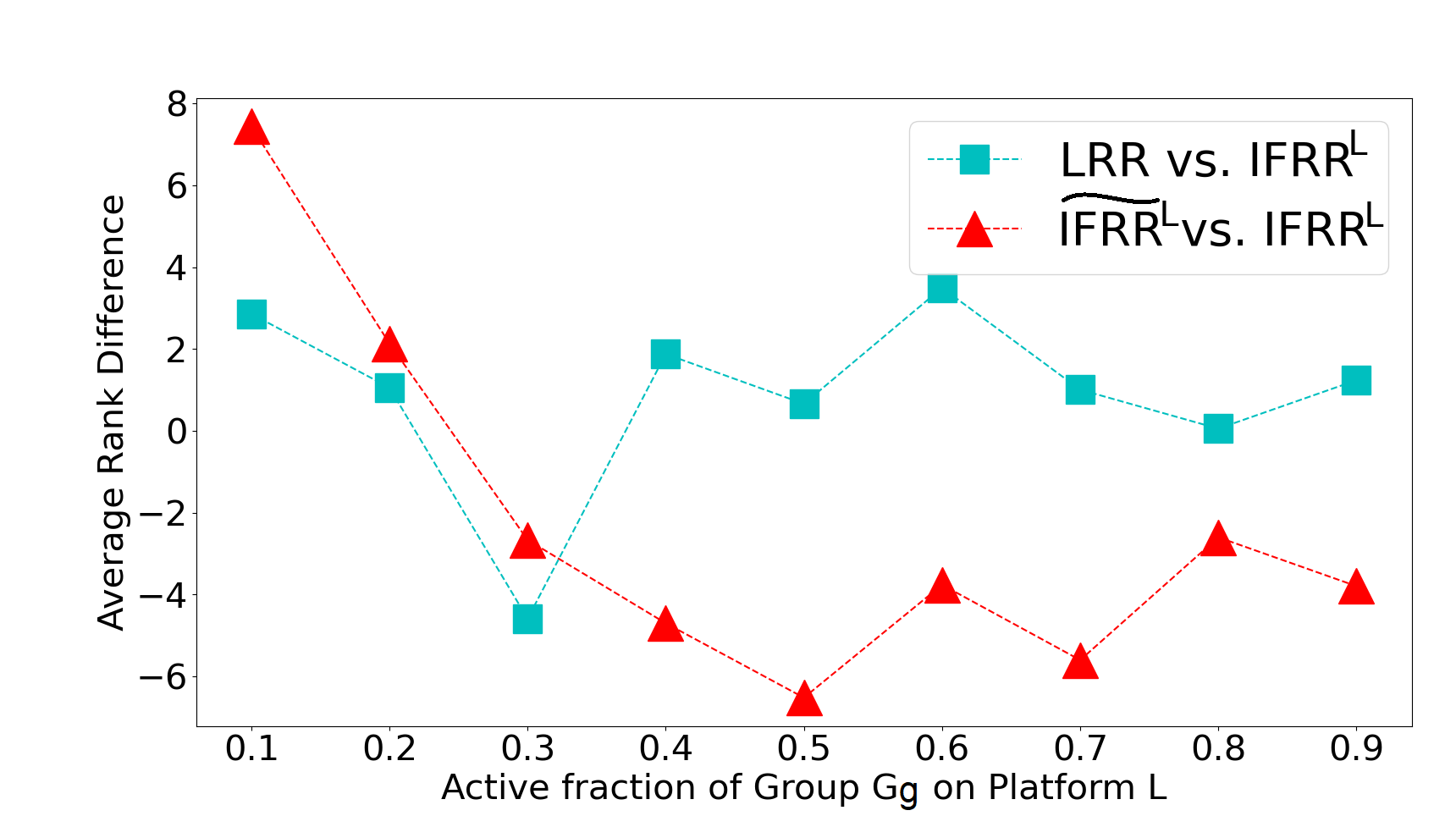}
    \caption{Average Rank Difference versus active fraction of group $G_g$ for group $G_g$.}
    \label{rankdiff}
\end{figure}

\begin{figure}[t]
    \centering
    \includegraphics[width=\linewidth]{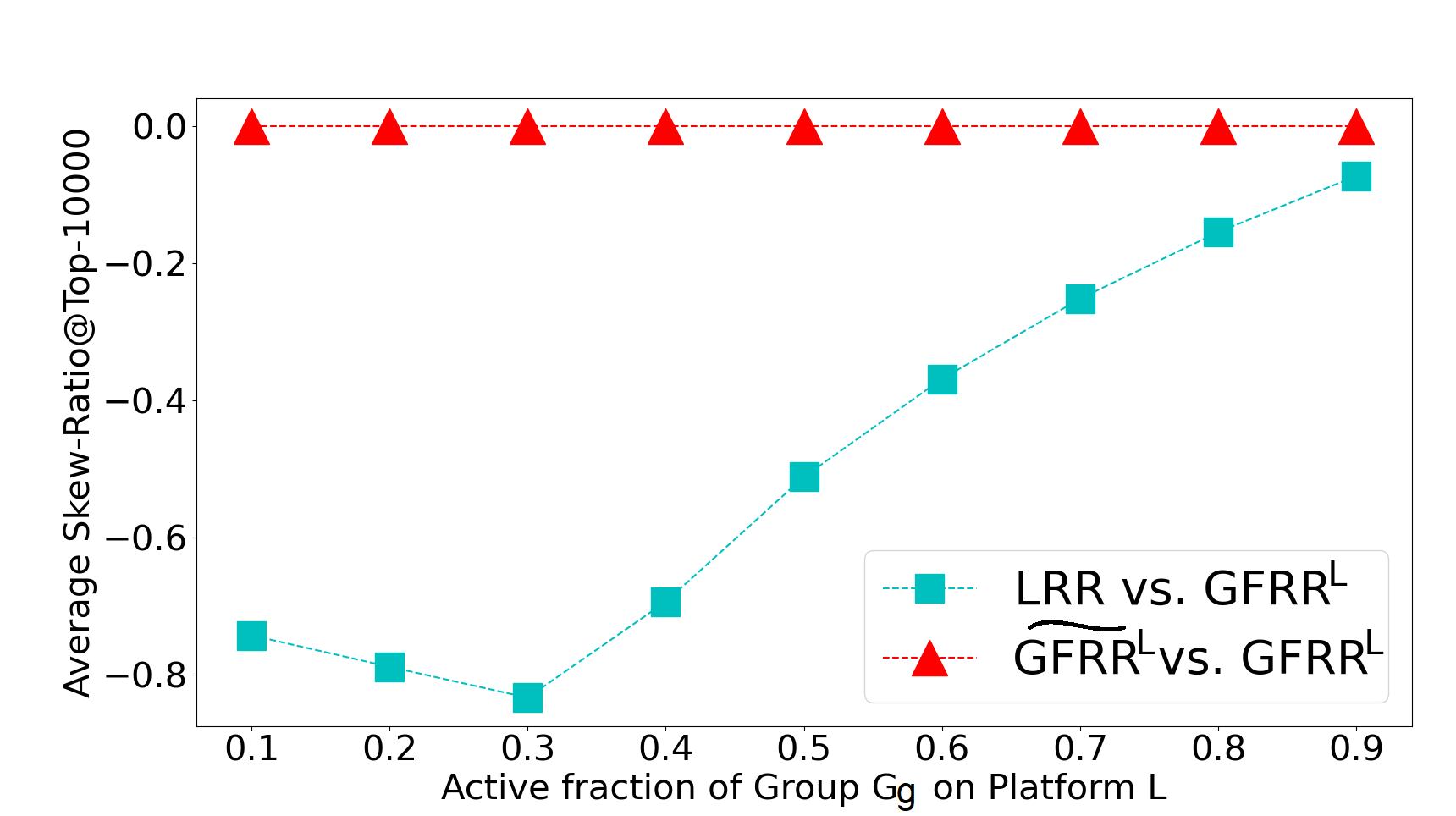}
    \caption{Average Skew ratio versus active fraction of group $G_g$ for top-10000 ranking for group $G_g$.}
    \label{skew}
\end{figure}



\section{Related Work}\label{related}


In the past two decades, several works have focused on defining and quantifying the extent of discrimination \cite{pedreshi2008discrimination, caliskan2017semantics, prates2019assessing}. Several strategies have been proposed for bias mitigation from machine learning models \cite{castro2019s, bird2019fairness, lepri2018fair, celis2016fair,friedler2019comparative, hardt2016equality}. There has been extensive work on algorithmic bias and discrimination in solutions applied to various disciplines, such as computer science, law, policy \cite{friedman1996bias, hajian2016algorithmic, vzliobaite2017measuring}. 

The recent works have discussed the limitations of different notions of fairness and non-discrimination \cite{corbett2017algorithmic,dwork2012fairness,friedler2016possibility}. Researchers have investigated both notions of fairness, (i) individual fairness that focuses on that similar people should be treated similarly \cite{dwork2012fairness, binns2020apparent}, and (ii) group fairness which focuses on that each group, either advantaged or disadvantaged, should be treated similarly \cite{pedreshi2008discrimination, lohia2019bias, pedreschi2009measuring}. These two fairness definitions represent intrinsically different point of views, and accommodating both fairness in a solution requires trade-offs \cite{friedler2016possibility}. In this work, the people holding similar ranks in their respective groups are considered equally eligible, as also shown in previous works \cite{geyik2019fairness}; and IFRR focuses that they should be treated equally. However, GFRR follows the requirement of group fairness. 

Next, we discuss state-of-the-art literature on fairness-aware ranking methods as our work is closely related to them \cite{asudeh2019designing, biega2018equity, castillo2019fairness, celis2017ranking, zehlike2017fa, yang2017measuring, singh2018fairness}. Asudeh et al. \cite{asudeh2019designing} proposed a method for generating fair ranking where the user has the flexibility to choose a weight that should be assigned to each criterion, within limits. The authors developed a system that is helpful for users in choosing criterion weights so that the desired fairness can be achieved efficiently and effectively. Celis et al. \cite{celis2017ranking} presented a theoretical investigation of ranking with fairness and diversity constraints. The authors proposed fast exact and approximation methods for the constrained ranking maximization problem. The proposed method runs in linear time, even if there are a large number of constraints, and the approximation ratio does not depend on the number of constraints.

Singh and Joachims \cite{singh2018fairness} proposed a framework to generate fairness ranking for the given fairness constraints in terms of exposure allocation. The proposed framework presents efficient methods for generating rankings that maximize the utility for the user while satisfying the given notion of fairness. The authors showed how a large range of fairness constraints could be incorporated using the proposed framework, such as demographic parity, disparate treatment, and disparate impact constraints. Beiga et al. \cite{biega2018equity} studied position bias where the low-ranked subjects achieve less attention that shows individual unfairness. The authors proposed a linear programming based solution to achieve amortized fairness where attention accumulated across a series of rankings is proportional to accumulated relevance.



Kulshrestha et al. \cite{kulshrestha2017quantifying} studied search bias in rankings while returning results for a search query on social media and quantified to what extent this output bias is due to the dataset that was used as the input to the ranking system and what extent is introduced due to the ranking system itself. They proposed a framework and verified it on Twitter for politics-related queries. The authors showed that both the input data as well as the ranking system contribute significantly to introduce varying amounts of bias in the search results. They further discussed the impact of such biases and possible methods to notify the user about the existing bias. The authors mentioned that it would be much more helpful if the users are given control to re-rank the search results in the presence of bias than just notify them about its existence.

Yang and Stoyanovich \cite{yang2017measuring} studied fairness in ranking and proposed fairness measures for ranked outputs to compare the distributions of protected and non-protected candidates on different prefixes of the ranked list. The authors also showed preliminary results of incorporating their proposed fairness measures into an optimization framework for improving the fairness of ranked outputs while maintaining accuracy. Zehlike et al. \cite{zehlike2017fa} studied fairness in top-$k$ selected candidates where the goal was to ensure that the proportion of candidates based on protected attributes remains statistically above a given minimum for every prefix of the ranking. This work was further extended by Geyik et al. \cite{geyik2019fairness}, and they verified the method on LinkedIn Talent Search and observed three times improvement. In our work, we have shown the unfairness in this kind of methods if the platform is not aware of the universal representation ratio.

\section{Conclusion and Further Direction}\label{conclusion}


In this paper, we discussed how a fairness-aware ranking method might be unfair due to the unavailability of universal information. We proposed individual and group fairness definitions that should be considered while verifying the fairness of the proposed representative ranking if some groups are less active. We also discussed the correlation and dependencies of the proposed definitions. We further proposed methods to generate ideal individual and group fair representative rankings that will work as a baseline. Next, we discussed methods and possible approaches to generate optimal IFRR and GFRR if the universal representation ratio is known. The experiments show preliminary results for unfairness increases in LRR as the group becomes less active and the improvements achieved by proposed methods. 

The main challenge in real-life applications is to identify the universal representation ratio based on the given protected attribute. In this work, we highlighted two approaches to solve this (i) perform a survey to understand the existing distribution, (ii) apply statistical methods to estimate the exact distribution based on the available score-ranking information of the candidates belonging to different protected attributes based groups. These solutions will be discussed in future work.

\bibliographystyle{ACM-Reference-Format}
\bibliography{linkedinbib}


\begin{thebibliography}{31}


\ifx \showCODEN    \undefined \def \showCODEN     #1{\unskip}     \fi
\ifx \showDOI      \undefined \def \showDOI       #1{#1}\fi
\ifx \showISBNx    \undefined \def \showISBNx     #1{\unskip}     \fi
\ifx \showISBNxiii \undefined \def \showISBNxiii  #1{\unskip}     \fi
\ifx \showISSN     \undefined \def \showISSN      #1{\unskip}     \fi
\ifx \showLCCN     \undefined \def \showLCCN      #1{\unskip}     \fi
\ifx \shownote     \undefined \def \shownote      #1{#1}          \fi
\ifx \showarticletitle \undefined \def \showarticletitle #1{#1}   \fi
\ifx \showURL      \undefined \def \showURL       {\relax}        \fi
\providecommand\bibfield[2]{#2}
\providecommand\bibinfo[2]{#2}
\providecommand\natexlab[1]{#1}
\providecommand\showeprint[2][]{arXiv:#2}

\bibitem[\protect\citeauthoryear{Asudeh, Jagadish, Stoyanovich, and Das}{Asudeh
  et~al\mbox{.}}{2019}]%
        {asudeh2019designing}
\bibfield{author}{\bibinfo{person}{Abolfazl Asudeh}, \bibinfo{person}{HV
  Jagadish}, \bibinfo{person}{Julia Stoyanovich}, {and} \bibinfo{person}{Gautam
  Das}.} \bibinfo{year}{2019}\natexlab{}.
\newblock \showarticletitle{Designing fair ranking schemes}. In
  \bibinfo{booktitle}{\emph{Proceedings of the 2019 International Conference on
  Management of Data}}. \bibinfo{pages}{1259--1276}.
\newblock


\bibitem[\protect\citeauthoryear{Biega, Gummadi, and Weikum}{Biega
  et~al\mbox{.}}{2018}]%
        {biega2018equity}
\bibfield{author}{\bibinfo{person}{Asia~J Biega}, \bibinfo{person}{Krishna~P
  Gummadi}, {and} \bibinfo{person}{Gerhard Weikum}.}
  \bibinfo{year}{2018}\natexlab{}.
\newblock \showarticletitle{Equity of attention: Amortizing individual fairness
  in rankings}. In \bibinfo{booktitle}{\emph{The 41st international acm sigir
  conference on research \& development in information retrieval}}.
  \bibinfo{pages}{405--414}.
\newblock


\bibitem[\protect\citeauthoryear{Binns}{Binns}{2020}]%
        {binns2020apparent}
\bibfield{author}{\bibinfo{person}{Reuben Binns}.}
  \bibinfo{year}{2020}\natexlab{}.
\newblock \showarticletitle{On the apparent conflict between individual and
  group fairness}. In \bibinfo{booktitle}{\emph{Proceedings of the 2020
  Conference on Fairness, Accountability, and Transparency}}.
  \bibinfo{pages}{514--524}.
\newblock


\bibitem[\protect\citeauthoryear{Bird, Kenthapadi, Kiciman, and Mitchell}{Bird
  et~al\mbox{.}}{2019}]%
        {bird2019fairness}
\bibfield{author}{\bibinfo{person}{Sarah Bird}, \bibinfo{person}{Krishnaram
  Kenthapadi}, \bibinfo{person}{Emre Kiciman}, {and} \bibinfo{person}{Margaret
  Mitchell}.} \bibinfo{year}{2019}\natexlab{}.
\newblock \showarticletitle{Fairness-aware machine learning: Practical
  challenges and lessons learned}. In \bibinfo{booktitle}{\emph{Proceedings of
  the Twelfth ACM International Conference on Web Search and Data Mining}}.
  \bibinfo{pages}{834--835}.
\newblock


\bibitem[\protect\citeauthoryear{Caliskan, Bryson, and Narayanan}{Caliskan
  et~al\mbox{.}}{2017}]%
        {caliskan2017semantics}
\bibfield{author}{\bibinfo{person}{Aylin Caliskan}, \bibinfo{person}{Joanna~J
  Bryson}, {and} \bibinfo{person}{Arvind Narayanan}.}
  \bibinfo{year}{2017}\natexlab{}.
\newblock \showarticletitle{Semantics derived automatically from language
  corpora contain human-like biases}.
\newblock \bibinfo{journal}{\emph{Science}} \bibinfo{volume}{356},
  \bibinfo{number}{6334} (\bibinfo{year}{2017}), \bibinfo{pages}{183--186}.
\newblock


\bibitem[\protect\citeauthoryear{Castillo}{Castillo}{2019}]%
        {castillo2019fairness}
\bibfield{author}{\bibinfo{person}{Carlos Castillo}.}
  \bibinfo{year}{2019}\natexlab{}.
\newblock \showarticletitle{Fairness and transparency in ranking}. In
  \bibinfo{booktitle}{\emph{ACM SIGIR Forum}}, Vol.~\bibinfo{volume}{52}. ACM
  New York, NY, USA, \bibinfo{pages}{64--71}.
\newblock


\bibitem[\protect\citeauthoryear{Castro}{Castro}{2019}]%
        {castro2019s}
\bibfield{author}{\bibinfo{person}{Clinton Castro}.}
  \bibinfo{year}{2019}\natexlab{}.
\newblock \showarticletitle{What's Wrong with Machine Bias}.
\newblock \bibinfo{journal}{\emph{Ergo, an Open Access Journal of Philosophy}}
  \bibinfo{volume}{6} (\bibinfo{year}{2019}).
\newblock


\bibitem[\protect\citeauthoryear{Celis, Deshpande, Kathuria, and Vishnoi}{Celis
  et~al\mbox{.}}{2016}]%
        {celis2016fair}
\bibfield{author}{\bibinfo{person}{L~Elisa Celis}, \bibinfo{person}{Amit
  Deshpande}, \bibinfo{person}{Tarun Kathuria}, {and}
  \bibinfo{person}{Nisheeth~K Vishnoi}.} \bibinfo{year}{2016}\natexlab{}.
\newblock \showarticletitle{How to be fair and diverse?}
\newblock \bibinfo{journal}{\emph{arXiv preprint arXiv:1610.07183}}
  (\bibinfo{year}{2016}).
\newblock


\bibitem[\protect\citeauthoryear{Celis, Straszak, and Vishnoi}{Celis
  et~al\mbox{.}}{2017}]%
        {celis2017ranking}
\bibfield{author}{\bibinfo{person}{L~Elisa Celis}, \bibinfo{person}{Damian
  Straszak}, {and} \bibinfo{person}{Nisheeth~K Vishnoi}.}
  \bibinfo{year}{2017}\natexlab{}.
\newblock \showarticletitle{Ranking with fairness constraints}.
\newblock \bibinfo{journal}{\emph{arXiv preprint arXiv:1704.06840}}
  (\bibinfo{year}{2017}).
\newblock


\bibitem[\protect\citeauthoryear{Cohen, Lipton, and Mansour}{Cohen
  et~al\mbox{.}}{2019}]%
        {cohen2019efficient}
\bibfield{author}{\bibinfo{person}{Lee Cohen}, \bibinfo{person}{Zachary~C
  Lipton}, {and} \bibinfo{person}{Yishay Mansour}.}
  \bibinfo{year}{2019}\natexlab{}.
\newblock \showarticletitle{Efficient candidate screening under multiple tests
  and implications for fairness}.
\newblock \bibinfo{journal}{\emph{arXiv preprint arXiv:1905.11361}}
  (\bibinfo{year}{2019}).
\newblock


\bibitem[\protect\citeauthoryear{Corbett-Davies, Pierson, Feller, Goel, and
  Huq}{Corbett-Davies et~al\mbox{.}}{2017}]%
        {corbett2017algorithmic}
\bibfield{author}{\bibinfo{person}{Sam Corbett-Davies}, \bibinfo{person}{Emma
  Pierson}, \bibinfo{person}{Avi Feller}, \bibinfo{person}{Sharad Goel}, {and}
  \bibinfo{person}{Aziz Huq}.} \bibinfo{year}{2017}\natexlab{}.
\newblock \showarticletitle{Algorithmic decision making and the cost of
  fairness}. In \bibinfo{booktitle}{\emph{Proceedings of the 23rd acm sigkdd
  international conference on knowledge discovery and data mining}}.
  \bibinfo{pages}{797--806}.
\newblock


\bibitem[\protect\citeauthoryear{Dwork, Hardt, Pitassi, Reingold, and
  Zemel}{Dwork et~al\mbox{.}}{2012}]%
        {dwork2012fairness}
\bibfield{author}{\bibinfo{person}{Cynthia Dwork}, \bibinfo{person}{Moritz
  Hardt}, \bibinfo{person}{Toniann Pitassi}, \bibinfo{person}{Omer Reingold},
  {and} \bibinfo{person}{Richard Zemel}.} \bibinfo{year}{2012}\natexlab{}.
\newblock \showarticletitle{Fairness through awareness}. In
  \bibinfo{booktitle}{\emph{Proceedings of the 3rd innovations in theoretical
  computer science conference}}. \bibinfo{pages}{214--226}.
\newblock


\bibitem[\protect\citeauthoryear{Friedler, Scheidegger, and
  Venkatasubramanian}{Friedler et~al\mbox{.}}{2016}]%
        {friedler2016possibility}
\bibfield{author}{\bibinfo{person}{Sorelle~A Friedler}, \bibinfo{person}{Carlos
  Scheidegger}, {and} \bibinfo{person}{Suresh Venkatasubramanian}.}
  \bibinfo{year}{2016}\natexlab{}.
\newblock \showarticletitle{On the (im) possibility of fairness}.
\newblock \bibinfo{journal}{\emph{arXiv preprint arXiv:1609.07236}}
  (\bibinfo{year}{2016}).
\newblock


\bibitem[\protect\citeauthoryear{Friedler, Scheidegger, Venkatasubramanian,
  Choudhary, Hamilton, and Roth}{Friedler et~al\mbox{.}}{2019}]%
        {friedler2019comparative}
\bibfield{author}{\bibinfo{person}{Sorelle~A Friedler}, \bibinfo{person}{Carlos
  Scheidegger}, \bibinfo{person}{Suresh Venkatasubramanian},
  \bibinfo{person}{Sonam Choudhary}, \bibinfo{person}{Evan~P Hamilton}, {and}
  \bibinfo{person}{Derek Roth}.} \bibinfo{year}{2019}\natexlab{}.
\newblock \showarticletitle{A comparative study of fairness-enhancing
  interventions in machine learning}. In \bibinfo{booktitle}{\emph{Proceedings
  of the conference on fairness, accountability, and transparency}}.
  \bibinfo{pages}{329--338}.
\newblock


\bibitem[\protect\citeauthoryear{Friedman and Nissenbaum}{Friedman and
  Nissenbaum}{1996}]%
        {friedman1996bias}
\bibfield{author}{\bibinfo{person}{Batya Friedman} {and} \bibinfo{person}{Helen
  Nissenbaum}.} \bibinfo{year}{1996}\natexlab{}.
\newblock \showarticletitle{Bias in computer systems}.
\newblock \bibinfo{journal}{\emph{ACM Transactions on Information Systems
  (TOIS)}} \bibinfo{volume}{14}, \bibinfo{number}{3} (\bibinfo{year}{1996}),
  \bibinfo{pages}{330--347}.
\newblock


\bibitem[\protect\citeauthoryear{Geyik, Ambler, and Kenthapadi}{Geyik
  et~al\mbox{.}}{2019}]%
        {geyik2019fairness}
\bibfield{author}{\bibinfo{person}{Sahin~Cem Geyik}, \bibinfo{person}{Stuart
  Ambler}, {and} \bibinfo{person}{Krishnaram Kenthapadi}.}
  \bibinfo{year}{2019}\natexlab{}.
\newblock \showarticletitle{Fairness-aware ranking in search \& recommendation
  systems with application to LinkedIn talent search}. In
  \bibinfo{booktitle}{\emph{Proceedings of the 25th ACM SIGKDD International
  Conference on Knowledge Discovery \& Data Mining}}.
  \bibinfo{pages}{2221--2231}.
\newblock


\bibitem[\protect\citeauthoryear{Hajian, Bonchi, and Castillo}{Hajian
  et~al\mbox{.}}{2016}]%
        {hajian2016algorithmic}
\bibfield{author}{\bibinfo{person}{Sara Hajian}, \bibinfo{person}{Francesco
  Bonchi}, {and} \bibinfo{person}{Carlos Castillo}.}
  \bibinfo{year}{2016}\natexlab{}.
\newblock \showarticletitle{Algorithmic bias: From discrimination discovery to
  fairness-aware data mining}. In \bibinfo{booktitle}{\emph{Proceedings of the
  22nd ACM SIGKDD international conference on knowledge discovery and data
  mining}}. \bibinfo{pages}{2125--2126}.
\newblock


\bibitem[\protect\citeauthoryear{Hardt, Price, and Srebro}{Hardt
  et~al\mbox{.}}{2016}]%
        {hardt2016equality}
\bibfield{author}{\bibinfo{person}{Moritz Hardt}, \bibinfo{person}{Eric Price},
  {and} \bibinfo{person}{Nati Srebro}.} \bibinfo{year}{2016}\natexlab{}.
\newblock \showarticletitle{Equality of Opportunity in Supervised Learning}. In
  \bibinfo{booktitle}{\emph{NIPS}}.
\newblock


\bibitem[\protect\citeauthoryear{Ilvento}{Ilvento}{2019}]%
        {ilvento2019metric}
\bibfield{author}{\bibinfo{person}{Christina Ilvento}.}
  \bibinfo{year}{2019}\natexlab{}.
\newblock \showarticletitle{Metric learning for individual fairness}.
\newblock \bibinfo{journal}{\emph{arXiv preprint arXiv:1906.00250}}
  (\bibinfo{year}{2019}).
\newblock


\bibitem[\protect\citeauthoryear{Kay, Matuszek, and Munson}{Kay
  et~al\mbox{.}}{2015}]%
        {kay2015unequal}
\bibfield{author}{\bibinfo{person}{Matthew Kay}, \bibinfo{person}{Cynthia
  Matuszek}, {and} \bibinfo{person}{Sean~A Munson}.}
  \bibinfo{year}{2015}\natexlab{}.
\newblock \showarticletitle{Unequal representation and gender stereotypes in
  image search results for occupations}. In
  \bibinfo{booktitle}{\emph{Proceedings of the 33rd Annual ACM Conference on
  Human Factors in Computing Systems}}. \bibinfo{pages}{3819--3828}.
\newblock


\bibitem[\protect\citeauthoryear{Kulshrestha, Eslami, Messias, Zafar, Ghosh,
  Gummadi, and Karahalios}{Kulshrestha et~al\mbox{.}}{2017}]%
        {kulshrestha2017quantifying}
\bibfield{author}{\bibinfo{person}{Juhi Kulshrestha},
  \bibinfo{person}{Motahhare Eslami}, \bibinfo{person}{Johnnatan Messias},
  \bibinfo{person}{Muhammad~Bilal Zafar}, \bibinfo{person}{Saptarshi Ghosh},
  \bibinfo{person}{Krishna~P Gummadi}, {and} \bibinfo{person}{Karrie
  Karahalios}.} \bibinfo{year}{2017}\natexlab{}.
\newblock \showarticletitle{Quantifying search bias: Investigating sources of
  bias for political searches in social media}. In
  \bibinfo{booktitle}{\emph{Proceedings of the 2017 ACM Conference on Computer
  Supported Cooperative Work and Social Computing}}. \bibinfo{pages}{417--432}.
\newblock


\bibitem[\protect\citeauthoryear{Lepri, Oliver, Letouz{\'e}, Pentland, and
  Vinck}{Lepri et~al\mbox{.}}{2018}]%
        {lepri2018fair}
\bibfield{author}{\bibinfo{person}{Bruno Lepri}, \bibinfo{person}{Nuria
  Oliver}, \bibinfo{person}{Emmanuel Letouz{\'e}}, \bibinfo{person}{Alex
  Pentland}, {and} \bibinfo{person}{Patrick Vinck}.}
  \bibinfo{year}{2018}\natexlab{}.
\newblock \showarticletitle{Fair, transparent, and accountable algorithmic
  decision-making processes}.
\newblock \bibinfo{journal}{\emph{Philosophy \& Technology}}
  \bibinfo{volume}{31}, \bibinfo{number}{4} (\bibinfo{year}{2018}),
  \bibinfo{pages}{611--627}.
\newblock


\bibitem[\protect\citeauthoryear{Lohia, Ramamurthy, Bhide, Saha, Varshney, and
  Puri}{Lohia et~al\mbox{.}}{2019}]%
        {lohia2019bias}
\bibfield{author}{\bibinfo{person}{Pranay~K Lohia},
  \bibinfo{person}{Karthikeyan~Natesan Ramamurthy}, \bibinfo{person}{Manish
  Bhide}, \bibinfo{person}{Diptikalyan Saha}, \bibinfo{person}{Kush~R
  Varshney}, {and} \bibinfo{person}{Ruchir Puri}.}
  \bibinfo{year}{2019}\natexlab{}.
\newblock \showarticletitle{Bias mitigation post-processing for individual and
  group fairness}. In \bibinfo{booktitle}{\emph{Icassp 2019-2019 ieee
  international conference on acoustics, speech and signal processing
  (icassp)}}. IEEE, \bibinfo{pages}{2847--2851}.
\newblock


\bibitem[\protect\citeauthoryear{Pedreschi, Ruggieri, and Turini}{Pedreschi
  et~al\mbox{.}}{2009}]%
        {pedreschi2009measuring}
\bibfield{author}{\bibinfo{person}{Dino Pedreschi}, \bibinfo{person}{Salvatore
  Ruggieri}, {and} \bibinfo{person}{Franco Turini}.}
  \bibinfo{year}{2009}\natexlab{}.
\newblock \showarticletitle{Measuring discrimination in socially-sensitive
  decision records}. In \bibinfo{booktitle}{\emph{Proceedings of the 2009 SIAM
  international conference on data mining}}. SIAM, \bibinfo{pages}{581--592}.
\newblock


\bibitem[\protect\citeauthoryear{Pedreshi, Ruggieri, and Turini}{Pedreshi
  et~al\mbox{.}}{2008}]%
        {pedreshi2008discrimination}
\bibfield{author}{\bibinfo{person}{Dino Pedreshi}, \bibinfo{person}{Salvatore
  Ruggieri}, {and} \bibinfo{person}{Franco Turini}.}
  \bibinfo{year}{2008}\natexlab{}.
\newblock \showarticletitle{Discrimination-aware data mining}. In
  \bibinfo{booktitle}{\emph{Proceedings of the 14th ACM SIGKDD international
  conference on Knowledge discovery and data mining}}.
  \bibinfo{pages}{560--568}.
\newblock


\bibitem[\protect\citeauthoryear{Prates, Avelar, and Lamb}{Prates
  et~al\mbox{.}}{2019}]%
        {prates2019assessing}
\bibfield{author}{\bibinfo{person}{Marcelo~OR Prates}, \bibinfo{person}{Pedro~H
  Avelar}, {and} \bibinfo{person}{Luis~C Lamb}.}
  \bibinfo{year}{2019}\natexlab{}.
\newblock \showarticletitle{Assessing gender bias in machine translation: a
  case study with google translate}.
\newblock \bibinfo{journal}{\emph{Neural Computing and Applications}}
  (\bibinfo{year}{2019}), \bibinfo{pages}{1--19}.
\newblock


\bibitem[\protect\citeauthoryear{S{\'a}nchez-Monedero, Dencik, and
  Edwards}{S{\'a}nchez-Monedero et~al\mbox{.}}{2020}]%
        {sanchez2020does}
\bibfield{author}{\bibinfo{person}{Javier S{\'a}nchez-Monedero},
  \bibinfo{person}{Lina Dencik}, {and} \bibinfo{person}{Lilian Edwards}.}
  \bibinfo{year}{2020}\natexlab{}.
\newblock \showarticletitle{What does it mean to'solve'the problem of
  discrimination in hiring? Social, technical and legal perspectives from the
  UK on automated hiring systems}. In \bibinfo{booktitle}{\emph{Proceedings of
  the 2020 conference on fairness, accountability, and transparency}}.
  \bibinfo{pages}{458--468}.
\newblock


\bibitem[\protect\citeauthoryear{Singh and Joachims}{Singh and
  Joachims}{2018}]%
        {singh2018fairness}
\bibfield{author}{\bibinfo{person}{Ashudeep Singh} {and}
  \bibinfo{person}{Thorsten Joachims}.} \bibinfo{year}{2018}\natexlab{}.
\newblock \showarticletitle{Fairness of exposure in rankings}. In
  \bibinfo{booktitle}{\emph{Proceedings of the 24th ACM SIGKDD International
  Conference on Knowledge Discovery \& Data Mining}}.
  \bibinfo{pages}{2219--2228}.
\newblock


\bibitem[\protect\citeauthoryear{Yang and Stoyanovich}{Yang and
  Stoyanovich}{2017}]%
        {yang2017measuring}
\bibfield{author}{\bibinfo{person}{Ke Yang} {and} \bibinfo{person}{Julia
  Stoyanovich}.} \bibinfo{year}{2017}\natexlab{}.
\newblock \showarticletitle{Measuring fairness in ranked outputs}. In
  \bibinfo{booktitle}{\emph{Proceedings of the 29th International Conference on
  Scientific and Statistical Database Management}}. \bibinfo{pages}{1--6}.
\newblock


\bibitem[\protect\citeauthoryear{Zehlike, Bonchi, Castillo, Hajian, Megahed,
  and Baeza-Yates}{Zehlike et~al\mbox{.}}{2017}]%
        {zehlike2017fa}
\bibfield{author}{\bibinfo{person}{Meike Zehlike}, \bibinfo{person}{Francesco
  Bonchi}, \bibinfo{person}{Carlos Castillo}, \bibinfo{person}{Sara Hajian},
  \bibinfo{person}{Mohamed Megahed}, {and} \bibinfo{person}{Ricardo
  Baeza-Yates}.} \bibinfo{year}{2017}\natexlab{}.
\newblock \showarticletitle{Fa* ir: A fair top-k ranking algorithm}. In
  \bibinfo{booktitle}{\emph{Proceedings of the 2017 ACM on Conference on
  Information and Knowledge Management}}. \bibinfo{pages}{1569--1578}.
\newblock


\bibitem[\protect\citeauthoryear{{\v{Z}}liobait{\.e}}{{\v{Z}}liobait{\.e}}{2017}]%
        {vzliobaite2017measuring}
\bibfield{author}{\bibinfo{person}{Indr{\.e} {\v{Z}}liobait{\.e}}.}
  \bibinfo{year}{2017}\natexlab{}.
\newblock \showarticletitle{Measuring discrimination in algorithmic decision
  making}.
\newblock \bibinfo{journal}{\emph{Data Mining and Knowledge Discovery}}
  \bibinfo{volume}{31}, \bibinfo{number}{4} (\bibinfo{year}{2017}),
  \bibinfo{pages}{1060--1089}.
\newblock


\end{thebibliography}

\end{document}